\theoremstyle{definition}
\newtheorem{assumption}{Assumption}
\newtheorem{lemma}{Lemma}
\newtheorem{theorem}{Theorem}
\newtheorem{remark}{Remark}
\newtheorem*{problem}{Problem}
\newtheorem{definition}{Definition}
\newtheorem{proposition}{Proposition}
\def\ps@pprintTitle{%
  \let\@oddhead\@empty
  \let\@evenhead\@empty
  \def\@oddfoot{\reset@font\hfil\thepage\hfil}
  \let\@evenfoot\@oddfoot
}
\begin{document}

\begin{frontmatter}

\title{Data-driven nonlinear output regulation via data-enforced incremental passivity}

\author{Yixuan Liu}
\ead{y.Liu-24@tudelft.nl}
\author{Meichen Guo\corref{cor1}}
\ead{m.guo@tudelft.nl}

\cortext[cor1]{Corresponding author}

\address{Delft Center for Systems and Control,
Delft University of Technology, 2628 CD Delft, The Netherlands}

\begin{abstract}

{This work proposes a data-driven nonlinear regulator design that achieves asymptotic reference tracking under external disturbances, where the reference and disturbances are generated by a linear exosystem.} The key idea is to design a data-driven feedback controller such that the closed-loop system is incrementally passive with respect to the regulation error and a virtual input. {By interconnecting the closed-loop system with an internal model and carefully designing the virtual input, we solve the data-driven nonlinear output regulation problem. We characterize the passivation feedback controller by a set of data-dependent linear matrix inequalities, which is independent of the internal model. This decoupled design offers high data efficiency and design flexibility. The proposed approach also solves the non-zero equilibrium stabilization problem of a class of nonlinear systems with unknown equilibrium input.} Numerical examples are presented to illustrate the effectiveness of the proposed designs.
\end{abstract}

\begin{keyword}
data-driven control \sep output regulation \sep nonlinear systems \sep incremental passivity \sep passivity-based control
\end{keyword}

\end{frontmatter}

\section{Introduction}

A primary challenge in controlling modern systems is the lack of mathematical models describing the dynamics of the systems. Many relevant systems found in fields such as neuroscience, epidemiology, and ecology have complex dynamics that are difficult to model using first principles or are too costly to identify a sufficiently accurate model. In applications where the major objective is to achieve control goals and an accurate model is of little or no interest, an efficient approach is the so-called direct data-driven control, where effective controllers are synthesized directly from data without explicitly identifying a model. Another challenge in controlling modern systems via data is providing rigorous performance guarantees in the presence of complex nonlinearities.

Direct data-driven control has been gaining increasing attention not only because it enables controller design directly from data with rigorous guarantees, but also because, in many settings, the resulting conditions can be formulated as computationally tractable semi-definite programs, as reviewed in \cite{Martin2023Survey}. Existing developments have primarily focused on the rigorous characterization of stabilizing controllers for both linear systems \cite{DePersis2020, Berberich_datadrivenMPC, vanWaarde2020DataInformativity} and nonlinear systems \cite{guo_Taylor, verhoek2023DDDLPV, koopman_NL,Guo2022,NL_cancellation}. In particular, for nonlinear systems, direct data-driven stabilization has been explored through, among other approaches, polynomial approximation \cite{guo_Taylor}, LPV embedding \cite{verhoek2023DDDLPV}, Koopman lifting \cite{koopman_NL}, and nonlinearity cancellation \cite{NL_cancellation}. In practice, applications often require more sophisticated control objectives beyond stabilization, including trajectory tracking. Motivated by this practical need, this work investigates a data-driven trajectory tracking problem for a class of nonlinear systems subject to external disturbances. 

Output regulation considers reference tracking and disturbance rejection of a plant, where the reference trajectory and disturbances are generated by an exosystem. Seminal results on linear and nonlinear output regulation can be found in \cite{FRANCIS1976457,Davison1976,Isidori1990,Huang2004}. The widely used output regulation framework augments the plant with an internal model and converts the output regulation problem into a stabilization problem of the augmented system. The authors of \cite{Pavlov2008IP&OPR} proposed an alternative approach using incremental passivity that describes the relationship between two arbitrary trajectories of a forced system. The approach involves designing a controller that makes the plant incrementally passive and interconnecting the plant with an incrementally passive internal model. The main feature of this approach is the decoupled design of the passivation controller and the internal model, which simplifies the overall regulator design. 

In addition to the model-based results mentioned above, output regulation of unknown systems has also been investigated by techniques such as adaptive dynamic programming \cite{gao2017ADPNonlinear}, neural networks \cite{lan2007NNOPR}, and reinforcement learning \cite{jiang2019RLOPR}. We note that most of these results either require a stabilizing control input to initiate the algorithm or demand a large amount of data without providing rigorous performance guarantees. Inspired by advancements in direct data-driven control, recent work has used finite offline data for output regulation. For example, \cite{TrentelmanDDOPR} established necessary and sufficient conditions for the informativity of a given set of data for the algebraic regulator problem. The authors of \cite{Zhu2024} later extended this result to linear robust output regulation via state feedback. {Assuming an unknown linear plant and disturbance model, \cite{Deutscher2024} used a Koopman operator to construct an extended system and achieved data-driven output regulation. Using data-based contractivity \cite{Hu2024Contraction}, recent work \cite{Hu2024DDOPR} proposed data-driven regulators for nonlinear systems. For the nonlinear case, the regulator in \cite{Hu2024DDOPR} guarantees bounded tracking errors for periodic disturbances and references. In summary, global asymptotic output regulation for nonlinear systems directly from data still requires further investigation.
}

This paper proposes a data-driven regulator design by enforcing incremental passivity via data. The proposed design contains two steps. The first step designs a data-driven state-feedback controller rendering the closed-loop system incrementally passive with respect to the regulation error and a virtual input. 
The second step interconnects the closed-loop system with an incrementally passive internal model designed based on the known exosystem and then obtains the overall regulator.


{\emph{\textbf{Contribution.}} We summarize the contributions of this work as follows. 
\begin{itemize}
    \item We solve the output regulation problem for a class of nonlinear plants directly from offline data in the presence of external disturbances and rigorously guarantee that the regulation error asymptotically converges to zero. This contrasts with recent results that either focus on linear plants or only achieve a bounded regulation error for nonlinear plants.
    \item The proposed data-driven output regulator consists of decoupled designs of the feedback passivation controller and the internal model. Specifically, the passivation controller is independent of the internal model, and the regulator design does not require data from the internal model, which makes the designs more efficient and flexible.
    \item  We characterize feedback incremental passivation controllers for a class of nonlinear systems
    . Based on this characterization, we derive data-based linear matrix inequalities (LMIs) for designing a feedback controller that renders the system incrementally passive. It is worth noting that in the direct data-driven setting, instead of enforcing passivity via control, most existing passivity-related results focus on data-based verification or analysis of dissipativity properties, such as in \cite{Koch2022DDDissVeri,Martin2023polyapprox,Verhoek2023DDDissipativityLPV}.
    \item Exploiting the proposed data-driven output regulation framework, we develop a data-driven solution to the nonlinear stabilization problem of a non-zero equilibrium, where the corresponding equilibrium input is neither known nor estimated.
\end{itemize}}

\emph{\textbf{Outline.}} The rest of the paper is organized as follows. Section \ref{sec:OPR&IP} formulates the data-driven nonlinear output regulation problem. Section \ref{sec:IP} presents model-based and data-driven designs of feedback controllers that make the closed-loop system incrementally passive. Section \ref{sec_DDORP} designs the internal model and the overall regulator by interconnecting the internal model and the closed-loop system obtained in the previous section. Based on the proposed output regulation approach, Section \ref{sec:stabilization} derives a result on data-driven nonlinear stabilization of a non-zero equilibrium with the unknown equilibrium input. Section \ref{sec_conclusion} summarizes the results of this work and discusses future research directions.

\emph{\textbf{Notation.}} Throughout the paper, $\mathbb R$ denotes the set of real numbers and $\mathbb R_{\ge 0}$ denotes the set of nonnegative real numbers. For a symmetric matrix $A$, $A\succ(\succeq)0$ indicates that $A$ is positive (semi-)definite, and $A\prec(\preceq)0$ indicates that $A$ is negative (semi-)definite. $A+(*)^{\top}$ represents the sum of a square matrix $A$ and its transpose. All elements of the vector $\mathds{1}_{n}\in\mathbb R^n$ are ones. $||x||$ denotes the Euclidean norm of a vector $x\in\mathbb R^n$. For matrices $A_1$, $A_2$, $\dots$, and $A_n$, $\mathrm{blockdiag}(A_1,A_2,\dots,A_n)$ denotes the matrix whose diagonal blocks are $A_1$ $A_2$, $\dots$, and $A_n$.


\section{Problem formulation}
\label{sec:OPR&IP}

Consider the nonlinear system
{
\begin{subequations}\label{system:NL}
\begin{align}
    \dot x &= AZ(x) +Bu +Ew \label{system:NLx}\\
    e & = CZ(x)  +Fw \label{system:NLe}
\end{align}
\end{subequations}
}
with state $x\in\mathbb R^n$, input $u\in\mathbb R^m$, regulation error $e\in\mathbb R^m$, and uncertainty $w\in\mathbb R^q$. {We assume that $Z:\mathbb R^n\rightarrow \mathbb R^{n_Z}$ is known and contains a library of continuous functions describing the unforced part of the nonlinear dynamics.} We write $Z$ in the form of $Z(x)=\begin{bmatrix} x^{\top} & Q(x)^{\top}\end{bmatrix}^{\top}$ where $Q(x)\in\mathbb R^{(n_Z-n)}$ denotes the nonlinear part of $Z$. The uncertain signal $w$ is generated by the exosystem
\begin{align}\label{exosystem}
    \dot w &= S w
\end{align}
where matrix $S\in\mathbb R^{q\times q}$ is known and the initial condition $w(t_0)\in\mathbb W\subset\mathbb R^q$ is unknown with $\mathbb W$ being a compact and forward-invariant set.

The output regulation problem considered in this work aims at designing a control input $u$ such that the regulation error $e$ converges to $0$ asymptotically despite the uncertainty $w$. {To ensure that the problem is solvable, we pose the following assumptions.}

\begin{assumption}\label{assump:S}
 {{The minimal polynomial of $S$ has only simple roots on the imaginary axis. }} 
\end{assumption}




\begin{assumption}\label{assump:RE}
    For any $w(t_0)\in\mathbb W$, {there exist $\mathcal X\in\mathbb R^{n\times q}$ and $\mathbf u(w)$ such that
    \begin{align}
    \label{RegulatorEquation}
        \mathcal X Sw &= AZ(\mathcal Xw) +B\mathbf u(w) +Ew\notag\\
        0 &= C Z(\mathcal Xw)+Fw.
    \end{align}}
\end{assumption}


\begin{remark}[On the assumptions] Assumption \ref{assump:S} is widely used in output regulation literature, which indicates that the exosystem in \eqref{exosystem} generates a combination of a step function with arbitrary amplitude and sinusoidal functions with arbitrary amplitudes and initial phases whose frequencies depend on the eigenvalues of $S$. 
{In \cite{Hu2024DDOPR}, the frequencies of the sinusoidal signals are required to be rationally related to ensure that the linear combination of exogenous signals is periodic. In contrast, Assumption \ref{assump:S} does not impose this condition, allowing the combined exogenous signal to include non-periodic components. 
Assumption \ref{assump:RE} serves as a feasibility condition ensuring that the asymptotic output regulation problem is solvable. In particular, it requires the state solution to be linear in $w$, which guarantees the existence of a linear internal model capable of solving the problem. Importantly, the proposed regulator in this work does not rely on the explicit steady-state solutions.
} 
\end{remark}

This work assumes that all the system matrices in \eqref{system:NL} are unknown, and a finite-length offline data set $\mathcal{DS}:=\{\dot x(t_k), x(t_k), u(t_k), e(t_k), k=0,1,\dots,T-1\}$ for some $T>1$ is sampled from one or multiple experiments. We formulate the data-driven output regulation problem as follows.
\begin{problem}[Data-driven output regulation]
    For nonlinear system \eqref{system:NL} with unknown system matrices and known exosystem \eqref{exosystem}, design a feedback control input $u$ using data set $\mathcal{DS}$, such that all the solutions to the closed-loop system are bounded and $\lim_{t\rightarrow \infty} e(t)=0$.
\end{problem}

\begin{remark}[Unknown nonlinear system]
    This work assumes that the vector $Z(x)$ consists of known nonlinearities, but the matrices $A$, $B$, $C$, $E$, and $F$ are unknown. {Existing results \cite{TrentelmanDDOPR} and \cite{Zhu2024} on data-driven output regulation assume that the matrices $C$, $E$, and $F$ are known.} For the approach proposed in this work, knowledge of $E$ and $F$ is not required for control design, and $C$ can be represented using the regulation error data.
\end{remark}

In what follows, we show that for exosystem \eqref{exosystem} with any initial condition, the sampled data of the exogenous signal $w$ can be arranged into a matrix that is the product of an unknown constant matrix and a known time-dependent matrix. This representation is in the same spirit as \cite[Section VI]{Hu2024Contraction} and is included here for the completeness of this paper.

Under Assumption \ref{assump:S}, we rearrange the exosystem such that $w$ and $S$ are partitioned as $w = \begin{bmatrix} w_1 & w_2 & \cdots & w_{2q_1}& w_{2q_1+1}&w_{2q_1+2}& \cdots & w_{2q_1 +q_2}\end{bmatrix}^{\top}$ and
\begin{align*}
    S=\mathrm{blockdiag}\left( \begin{bmatrix}
    0 & \sigma_1 \\ -\sigma_1 & 0 
\end{bmatrix}, \dots, \begin{bmatrix}
    0 & \sigma_{q_1} \\ -\sigma_{q_1} & 0 
\end{bmatrix}, 
0_{q_2\times q_2}\right)
\end{align*}
where $q_1$ and $q_2$ are nonnegative integers such that $2q_1+q_2=q$. We arrange the sampled data into the matrix 
\begin{align*}
W_0:=\begin{bmatrix} w(t_0) & w(t_1) & \cdots & w(t_{T-1}) \end{bmatrix}\in\mathbb R^{q\times T}, 
\end{align*}
which can be written as
\begin{align}\label{datamatrixW0}
    W_0 = \Gamma M_0
\end{align}
where the unknown matrix $\Gamma$ and known matrix $M_0$ are
\begin{align*}
    \Gamma &:=\mathrm{diag}\big(\gamma_1, \dots, \gamma_{q_1}, w_{2q_1+1}(t_0),\dots,w_{2q_1+q_2}(t_0) \big) \\
    &\quad\in\mathbb R^{q\times q} \\
    M_0 &:= \begin{bmatrix}
        \overline M(t_0) & \cdots & \overline M(t_{T-1})
    \end{bmatrix}\in\mathbb R^{q\times T} \\
    \overline M(t) &:= \begin{bmatrix}
        \overline M_1(t)^{\top} & \cdots & \overline M_{q_1}(t)^{\top} & \mathds{1}_{q_2}^{\top}
    \end{bmatrix}^{\top} \in\mathbb R^{q}\\
    \overline M_i(t) &:= \begin{bmatrix} \sin(\sigma_i t) & \cos(\sigma_i t)\end{bmatrix}^{\top} \in\mathbb R^2,\quad i=1,\dots,q_1
\end{align*}
for unknown constant matrices $\gamma_i\in\mathbb R^{2\times2}$, $i=1,\dots,q_1$.


\section{Incremental passivation via {state-}feedback}
\label{sec:IP}
This section first briefly reviews incremental passivity and then respectively presents model-based and data-driven conditions that characterize a static state-feedback controller that renders the nonlinear system \eqref{system:NL} incrementally passive with respect to the regulation error $e$ and a virtual input $v$. The results of this section address how to enforce incremental passivity via data-driven feedback and lay the foundation for solving the proposed data-driven output regulation problem.

\subsection{Incremental passivity}

Incremental passivity is an input-output property that holds for any two arbitrary trajectories generated by any two inputs. Following \cite{Pavlov2008IP&OPR}, we revisit the definition and interconnection properties of incrementally passive systems, which play an important role in deriving the main results.  

\begin{definition}[Incremental passivity \cite{Pavlov2008IP&OPR}]
\label{def:IP}
    The system
    \begin{align}\label{system:nonlinear}
        \dot x & = f(x,u,t) \notag\\
        y & = h(x,t)
    \end{align}
    with state $x\in\mathbb R^n$, input $u\in\mathbb R^m$, and output $y\in\mathbb R^m$ is incrementally passive if there exists a $C^1$ storage function $V(t,x_1, x_2):\mathbb R_{\ge 0}\times\mathbb R^{2n}\rightarrow \mathbb R_{\ge 0}$ such that for any two inputs $u_1(t)$ and $u_2(t)$ and any two solutions $x_1(t)$ and $x_2(t)$ of \eqref{system:nonlinear} corresponding to the inputs, the respective outputs $y_1=h(x_1,t)$ and $y_2=h(x_2,t)$ satisfy the inequality
    \begin{align}\label{equ:incre_pass_ineq}
        \dot V(t,x_1,x_2) &= \frac{\partial V}{\partial t}+\frac{\partial V}{\partial x_1}f(x_1,u_1,t) + \frac{\partial V}{\partial x_2} f(x_2,u_2,t) \notag\\
        &\le (y_1-y_2)^{\top}(u_1-u_2).
    \end{align}
\end{definition}

\begin{definition}[Regular storage function \cite{Pavlov2008IP&OPR}]
    A storage function $V(t,x_1,x_2)$ is called regular if for any sequence $(t_k,x_{1k},x_{2k})$, $k=1,2,\dots$, such that $x_{2k}$ is bounded, $t_k$ tends to infinity, and $\|x_{1k}\|\rightarrow +\infty$, it holds that $V(t_k,x_{1k},x_{2k})\rightarrow +\infty$ as $k\rightarrow +\infty$.
\end{definition} 

A simple example of a regular storage function is $V(x_1,x_2)=(x_1-x_2)^{\top}\mathcal P(x_1-x_2)$ with some $\mathcal P\succ 0$. 

Similarly to the conventional passivity property, the feedback interconnection of two incrementally passive systems with regular storage functions is also incrementally passive with a regular storage function. 

\begin{lemma} [\cite{Pavlov2008IP&OPR}]
\label{Lemma:interconnectionIP}
    Suppose the systems 
    \begin{subequations}
        \begin{align}
            \dot x &= F_x(x,u_x,t), \quad y_x = H_x(x,t) \label{sys:inter1} \\
            \dot z &=F_z(z,u_z,t), \quad y_z = H_z(z,t) \label{sys:inter2}
        \end{align}
    \end{subequations}
    are incrementally passive. Then the system that interconnects \eqref{sys:inter1} and \eqref{sys:inter2} through $u_x=\beta y_z +v_x$ and $u_z=-\beta^{\top}y_x+v_z$ with some square matrix gain $\beta$ is incrementally passive with respect to input $\overline v:=\begin{bmatrix} v_x^{\top} & v_z^{\top} \end{bmatrix}^{\top}$ and output $\overline y:=\begin{bmatrix} y_x^{\top} & y_z^{\top}\end{bmatrix}^{\top}$.
\end{lemma}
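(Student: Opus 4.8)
The plan is to take the sum of the two subsystem storage functions as a storage function for the interconnection. Since \eqref{sys:inter1} and \eqref{sys:inter2} are incrementally passive, let $V_x(t,x_1,x_2)$ and $V_z(t,z_1,z_2)$ be their respective $C^1$ storage functions satisfying the inequality \eqref{equ:incre_pass_ineq}. I would propose $V := V_x + V_z$ as the candidate storage function for the interconnected system; it is $C^1$ and nonnegative as a sum of such functions, so it is a legitimate candidate in the sense of Definition \ref{def:IP}.

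Next, I would fix any two inputs $\overline v_1$, $\overline v_2$ and any two corresponding solutions of the interconnected system, labelling all signals by a subscript $1$ or $2$. Adding the two subsystem inequalities bounds $\dot V = \dot V_x + \dot V_z$ by $(y_{x,1}-y_{x,2})^{\top}(u_{x,1}-u_{x,2}) + (y_{z,1}-y_{z,2})^{\top}(u_{z,1}-u_{z,2})$. I would then substitute the interconnection laws $u_x=\beta y_z + v_x$ and $u_z=-\beta^{\top}y_x+v_z$ evaluated along each trajectory, so that the input differences read $u_{x,1}-u_{x,2}=\beta(y_{z,1}-y_{z,2})+(v_{x,1}-v_{x,2})$ and $u_{z,1}-u_{z,2}=-\beta^{\top}(y_{x,1}-y_{x,2})+(v_{z,1}-v_{z,2})$.

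The crux of the argument is the cancellation of the cross terms. After substitution, the two terms coupling the outputs are $(y_{x,1}-y_{x,2})^{\top}\beta(y_{z,1}-y_{z,2})$ and $-(y_{z,1}-y_{z,2})^{\top}\beta^{\top}(y_{x,1}-y_{x,2})$. Since each is a scalar and one is the transpose of the other, they sum to zero; this is precisely why the antisymmetric gain structure $\beta$ versus $-\beta^{\top}$ is imposed. What remains is $\dot V \le (y_{x,1}-y_{x,2})^{\top}(v_{x,1}-v_{x,2}) + (y_{z,1}-y_{z,2})^{\top}(v_{z,1}-v_{z,2})$, which I would rewrite compactly as $(\overline y_1 - \overline y_2)^{\top}(\overline v_1 - \overline v_2)$ using the stacked definitions of $\overline v$ and $\overline y$. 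This is exactly the incremental passivity inequality for the interconnected system, so the proof is complete.

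I do not anticipate a genuine obstacle: the only nontrivial observation is the skew-symmetric cancellation of the cross terms, and once the storage function is chosen additively the rest is routine algebra. If regularity of the resulting storage function is also needed for the later regulator analysis, I would add a one-line remark that $V_x + V_z$ inherits regularity from its summands, since divergence of either component along a sequence forces divergence of the sum.
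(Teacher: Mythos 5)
Your proof is correct and is exactly the standard argument for this result: the paper itself states the lemma without proof (deferring to the cited reference), and the additive storage function $V_x+V_z$ together with the skew-symmetric cancellation of the cross terms $(y_{x,1}-y_{x,2})^{\top}\beta(y_{z,1}-y_{z,2})$ and $-(y_{z,1}-y_{z,2})^{\top}\beta^{\top}(y_{x,1}-y_{x,2})$ is precisely how it is established there. Your closing remark on regularity is also apt, since the surrounding discussion in the paper asserts that the interconnection inherits a regular storage function and that property is indeed needed later in Theorem 2.
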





\subsection{Model-based incremental passivation}

Before presenting our main result on incremental passivation via data-driven state-feedback, we first address the model-based case to clearly reveal the underlying design, which leads to the data-driven result. It should be noted that designing a feedback controller to enforce conventional passivity has been extensively studied in the literature; for example, \cite{byrnes1991passivity,seron1994adaptivePassivation,jiang1996passification,larsen2001passivationOutput}. Making a system incrementally passive via output feedback has also been addressed in work such as \cite{Pavlov2008IP&OPR}. The objective of this subsection is to develop a model-based state-feedback passivation approach that benefits the subsequent extension to data-driven passivation. 

Consider the static state-feedback controller
\begin{align}\label{controller:Zx}
    u & =  KZ(x) + v
\end{align}
where $v\in\mathbb R^m$ is a virtual input and $K$ is the control gain to be designed. Applying controller \eqref{controller:Zx} to  plant \eqref{system:NL} results in the closed-loop system 
\begin{align}\label{system:CLnonlinear}
 \dot x &= (A+BK) Z(x) + B v +Ew \notag\\
    e & = CZ(x) +Fw.
\end{align}

Using Definition \ref{def:IP}, a characterization of incremental passivity for system \eqref{system:CLnonlinear} is derived.

\begin{lemma}\label{lemma:IPZx}
    If there exist $K\in\mathbb R^{m\times n_Z}$ and a positive definite matrix $\mathcal P\in\mathbb R^{n\times n}$ such that
    \begin{subequations} 
    \label{conditions:IPZx_nox}
    \begin{align}
          \mathcal I^{\top}\mathcal P(A+BK) + (A+BK)^{\top}\mathcal P\mathcal I &\preceq 0 \label{condition:IPZxA}\\
         \mathcal I^{\top} \mathcal P B &=  C^{\top}, \label{condition:IPZxBC}
    \end{align}
    \end{subequations}
    where $\mathcal I := \begin{bmatrix}
          I_n & 0_{n\times(n_Z-n)} 
        \end{bmatrix}$, 
    then the closed-loop system \eqref{system:CLnonlinear} is incrementally passive with respect to input $v$, output $e$, and a regular storage function. 
\end{lemma}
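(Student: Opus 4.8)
The plan is to verify incremental passivity directly from Definition \ref{def:IP} rather than to invoke Lemma \ref{lemma:general}, because the closed-loop output $e = CZ(x)+Fw$ is nonlinear in $x$ and does not fit the linear output $Cx+h(w)$ assumed there. I would take the candidate storage function $V(x_1,x_2)=\frac{1}{2}(x_1-x_2)^{\top}\mathcal{P}(x_1-x_2)$, which is regular since $\mathcal{P}\succ 0$: if $x_2$ stays bounded while $\|x_1\|\to+\infty$, then $\|x_1-x_2\|\to+\infty$ and hence $V\to+\infty$. Fixing the exogenous signal $w$, I would consider two solutions $x_1,x_2$ of \eqref{system:CLnonlinear} driven by two virtual inputs $v_1,v_2$, so that the drift term $Ew$ and the output term $Fw$ cancel in the respective increments and $V$ has no explicit time dependence.

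The key structural observation is that the library is built as $Z(x)=\begin{bmatrix} x^{\top} & Q(x)^{\top}\end{bmatrix}^{\top}$, so $\mathcal{I}Z(x)=x$ for every $x$. Writing $\tilde{Z}:=Z(x_1)-Z(x_2)$, this yields $x_1-x_2=\mathcal{I}\tilde{Z}$, which is exactly what links the $n$-dimensional state increment to the $n_Z$-dimensional library increment and explains the appearance of $\mathcal{I}$ in the conditions. Differentiating $V$ along the two trajectories and substituting $\dot{x}_1-\dot{x}_2=(A+BK)\tilde{Z}+B(v_1-v_2)$ gives $\dot{V}=\tilde{Z}^{\top}\mathcal{I}^{\top}\mathcal{P}(A+BK)\tilde{Z}+\tilde{Z}^{\top}\mathcal{I}^{\top}\mathcal{P}B(v_1-v_2)$.

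For the quadratic term I would symmetrize using the scalar identity $s=\frac{1}{2}(s+s^{\top})$, rewriting it as $\frac{1}{2}\tilde{Z}^{\top}\big[\mathcal{I}^{\top}\mathcal{P}(A+BK)+(A+BK)^{\top}\mathcal{P}\mathcal{I}\big]\tilde{Z}$, which is $\le 0$ by \eqref{condition:IPZxA}. For the cross term I would use \eqref{condition:IPZxBC}, i.e.\ $\mathcal{I}^{\top}\mathcal{P}B=C^{\top}$, together with the fact that the output increment is exactly $e_1-e_2=C\tilde{Z}$, so that $\tilde{Z}^{\top}\mathcal{I}^{\top}\mathcal{P}B(v_1-v_2)=(C\tilde{Z})^{\top}(v_1-v_2)=(e_1-e_2)^{\top}(v_1-v_2)$. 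Combining the two bounds gives $\dot{V}\le (e_1-e_2)^{\top}(v_1-v_2)$, which is precisely the incremental passivity inequality \eqref{equ:incre_pass_ineq} with input $v$ and output $e$, completing the argument.

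The computation is routine once the structural identity $x=\mathcal{I}Z(x)$ is in hand, so I expect the only genuinely delicate points to be (i) getting this identity and the dimension matching right, since $\tilde{Z}$ lives in $\mathbb{R}^{n_Z}$ while the increment $x_1-x_2$ lives in $\mathbb{R}^n$, and (ii) carrying the factor $\frac{1}{2}$ in $V$ so that the cross term reproduces $(e_1-e_2)^{\top}(v_1-v_2)$ without a spurious coefficient. I would also emphasize that, because \eqref{condition:IPZxA} is a \emph{constant} matrix inequality, no mean-value or Taylor argument on the nonlinearity $Q$ is needed: the quadratic form in $\tilde{Z}$ is dominated for all admissible increments simultaneously.
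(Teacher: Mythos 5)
Your proposal is correct and follows essentially the same route as the paper's proof: the same storage function $V(x_1,x_2)=\frac{1}{2}(x_1-x_2)^{\top}\mathcal P(x_1-x_2)$, the same structural identity $x_1-x_2=\mathcal I\big(Z(x_1)-Z(x_2)\big)$, and the same use of \eqref{condition:IPZxA} for the quadratic term and \eqref{condition:IPZxBC} together with $e_1-e_2=C\big(Z(x_1)-Z(x_2)\big)$ for the cross term. Your observation that Lemma \ref{lemma:general} does not apply directly because the output is nonlinear in $x$ matches the paper's stated rationale for arguing directly from Definition \ref{def:IP}.
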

\begin{proof}
    Define the regular storage function as $V(x_1,x_2)=\frac{1}{2}(x_1-x_2)^{\top}\mathcal P (x_1-x_2)$. The time derivative of the storage function along the solutions $x_1$ and $x_2$ corresponding to the inputs $v_1$ and $v_2$ satisfies that
    \begin{align*}
        &\quad \dot V(x_1,x_2)\\
        & =\frac{1}{2} (x_1-x_2)^{\top}\mathcal P(\dot x_1-\dot x_2) + \frac{1}{2}(\dot x_1-\dot x_2)^{\top}\mathcal P(x_1-x_2) 
        \\
        & = \frac{1}{2}(x_1-x_2)^{\top} \mathcal P \Big( (A+BK) \big(Z(x_1)-Z(x_2)\big)\\
        &\quad +B (v_1-v_2)\Big ) + (*)^{\top}.
    \end{align*}
Recalling $Z(x)=\begin{bmatrix} x^{\top} & Q(x)^{\top}\end{bmatrix}^{\top}$ and by the definition of $\mathcal I$, one has that
\begin{align*}
    &\quad x_1- x_2 
     =  \mathcal I \begin{bmatrix}
         x_1-x_2 \\ Q(x_1)-Q(x_2)
    \end{bmatrix}
     =  \mathcal I \big( Z(x_1)-Z(x_2)\big).
\end{align*}
Therefore, it holds that 
\begin{align*}
     \dot V(x_1,x_2)
    & =\frac{1}{2}\big(Z(x_1)- Z(x_2)\big)^{\top} \Big (\mathcal I^{\top}\mathcal P(A+BK) \\
    &\quad + (A+BK)^{\top}\mathcal P \mathcal I \Big) \big(Z(x_1)-Z(x_2)\big) \\
    &\quad +\big( Z(x_1)- Z(x_2)\big)^{\top} \mathcal I^{\top}\mathcal P B (v_1-v_2).
\end{align*}
The difference between the corresponding $e_1$ and $e_2$ is
\begin{align*}
    &\quad~~ e_1-e_2 \\
    &= CZ(x_1) +Fw- C Z(x_2) - Fw = C \big(Z(x_1)- Z(x_2)\big).
\end{align*}
Under conditions \eqref{condition:IPZxA} and \eqref{condition:IPZxBC}, one has that
\begin{align*}
    \dot V(x_1,x_2) \le (e_1-e_2)^{\top}(v_1-v_2).
\end{align*}
This shows that system \eqref{system:CLnonlinear} is incrementally passive with respect to input $v$, output $e$, and regular storage function $V(x_1,x_2)$.
\end{proof}

{ 


Lemma \ref{lemma:IPZx} provides a feedback incremental passivation control characterization for the nonlinear system in \eqref{system:NL}. We note that the incremental passivity conditions presented in \cite[Lemma 3]{Pavlov2008IP&OPR} is not applicable to \eqref{system:NL}, as the system output therein depends linearly on $x$.

}

\begin{remark}[Passivation of uncertain systems]
\label{remark:RobustPassivation}
    In the case where the system matrices in \eqref{system:NL} contain uncertain parameters, the conditions characterizing incremental passivity need to compensate for the uncertainties. If the bounds of the uncertainties are available, one can derive a robust version of \eqref{condition:IPZxA} to hold for all uncertainties within the known bound, but the same is difficult for \eqref{condition:IPZxBC}. We note that passivation of uncertain systems have been addressed in work such as \cite{seron1994adaptivePassivation,jiang1996passification,peaucelle2008passificationRobust} using adaptive control techniques and \cite{xie1998passivityUncertain} via integral quadratic constraints (IQCs). {Applying techniques such as passivity indices may also reduce the fragility caused by \eqref{condition:IPZxBC} for robust passivation of uncertain systems.} 
\end{remark}

\subsection{Data-driven incremental passivation }

In this subsection, we extend the model-based conditions to data-based ones that characterize the incremental passivity of closed-loop system \eqref{system:CLnonlinear} using the data set $\mathcal{DS}$. To achieve this, we first use the data set to represent the closed-loop dynamics in \eqref{system:CLnonlinear}. 

We obtain the following data matrices
  \begin{align*}
    X_0 & := \begin{bmatrix}
               x(t_0) & x(t_1) & \cdots & x(t_{T-1})
             \end{bmatrix}\in\mathbb R^{n\times T} \\
    Z_0 &:= \begin{bmatrix}
           Z(x(t_0)) & Z(x(t_1)) & \cdots & Z(x(t_{T-1}))
         \end{bmatrix}\in\mathbb R^{n_Z\times T} \\
    X_1 & := \begin{bmatrix}
              \dot x(t_0) & \dot x(t_1) & \cdots & \dot x(t_{T-1})
             \end{bmatrix}\in\mathbb R^{n\times T} \\
    E_0 & := \begin{bmatrix}
               e(t_0) & e(t_1) & \cdots & e(t_{T-1})
             \end{bmatrix}\in\mathbb R^{m\times T} \\
    U_0 & := \begin{bmatrix}
               u(t_0) & u(t_1) & \cdots & u(t_{T-1})
             \end{bmatrix}\in\mathbb R^{m\times T}
  \end{align*}
from $\mathcal{DS}$ and the known vector $Z(x)$. The following lemma derives a representation of \eqref{system:CLnonlinear} using the data matrices.

\begin{lemma}\label{lemma:data_driven_repre}
    Consider nonlinear system \eqref{system:NL}, controller \eqref{controller:Zx}, and data set $\mathcal{DS}$. For any matrices $K\in\mathbb R^{m\times n_Z}$, and $G=\begin{bmatrix} G_1 & G_2\end{bmatrix}\in\mathbb R^{T\times(n_Z+m)}$ with $G_1\in\mathbb R^{T\times n_Z}$ and $G_2\in\mathbb R^{T\times m}$ satisfying 
    \begin{align}\label{condition:CLdatarep}
        \begin{bmatrix}
        I_{n_Z}  & 0_{n_Z\times m} \\
        K  & I_m \\
        0_{q\times n_Z} & 0_{q\times m} 
    \end{bmatrix}
    =\begin{bmatrix}
        Z_0 \\ U_0 \\ M_0
    \end{bmatrix}G,
    \end{align}
    the closed-loop system composed by \eqref{system:NL} and \eqref{controller:Zx} is
    \begin{align}\label{system:CLData}
    \dot x &= A_{\mathrm{d}} Z(x) + B_{\mathrm{d}} v + Ew \notag\\
    e & = C_{\mathrm{d}} Z(x) +Fw
    \end{align}
    where the data-based system matrices are defined as 
    \begin{align} \label{datarepresentation:Zx}
      A_{\mathrm{d}} =X_1G_1,~B_{\mathrm{d}} = X_1G_2,~ C_{\mathrm{d}} = E_0G_1.
\end{align}
\end{lemma}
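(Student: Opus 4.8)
The plan is to reduce the claim to pure linear algebra over the collected data. First I would write down the open-loop data equations obtained by evaluating \eqref{system:NL} at each sample $t_k$ and stacking the columns: from $\dot x(t_k)=AZ(x(t_k))+Bu(t_k)+Ew(t_k)$ and $e(t_k)=CZ(x(t_k))+Fw(t_k)$ one obtains $X_1=AZ_0+BU_0+EW_0$ and $E_0=CZ_0+FW_0$. Substituting the exosystem representation $W_0=\Gamma M_0$ from \eqref{datamatrixW0} and grouping the matrices, these become
\begin{align*}
X_1=\begin{bmatrix} A & B & E\Gamma\end{bmatrix}\begin{bmatrix} Z_0 \\ U_0 \\ M_0\end{bmatrix},\qquad
E_0=\begin{bmatrix} C & 0 & F\Gamma\end{bmatrix}\begin{bmatrix} Z_0 \\ U_0 \\ M_0\end{bmatrix}.
\end{align*}
This single identity, relating the unknown system matrices to the measured data through the common stacked regressor, is the backbone of the argument.

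Next I would exploit the defining constraint \eqref{condition:CLdatarep}. Partitioning $G=\begin{bmatrix} G_1 & G_2\end{bmatrix}$ as in the statement, the two block-columns of \eqref{condition:CLdatarep} read
\begin{align*}
\begin{bmatrix} Z_0 \\ U_0 \\ M_0\end{bmatrix}G_1=\begin{bmatrix} I_{n_Z} \\ K \\ 0\end{bmatrix},\qquad
\begin{bmatrix} Z_0 \\ U_0 \\ M_0\end{bmatrix}G_2=\begin{bmatrix} 0 \\ I_m \\ 0\end{bmatrix}.
\end{align*}
Right-multiplying the identities for $X_1$ and $E_0$ by $G_1$ and by $G_2$ and inserting these selection patterns then yields directly
\begin{align*}
A_{\mathrm d}=X_1G_1=A+BK,\qquad B_{\mathrm d}=X_1G_2=B,\qquad C_{\mathrm d}=E_0G_1=C,
\end{align*}
so the data-based system \eqref{system:CLData} has exactly the matrices of the closed-loop system \eqref{system:CLnonlinear}, which is the assertion.

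The step I expect to carry the real content is the cancellation of the unknown disturbance matrices. The products $E\Gamma$ and $F\Gamma$ cannot be computed from $\mathcal{DS}$; what makes the representation work is that they appear only multiplied by the bottom block $M_0$, and the constraint forces $M_0G_1=0$ and $M_0G_2=0$. Hence the disturbance contribution is annihilated regardless of $E$, $F$, and $\Gamma$, which is precisely why neither $E$ nor $F$ needs to be known. I would stress that the lemma is conditional, presupposing a $G$ satisfying \eqref{condition:CLdatarep}, so no feasibility, rank, or persistency-of-excitation argument is required at this stage; establishing solvability of \eqref{condition:CLdatarep} is a separate matter. The only external fact I rely on is the representation \eqref{datamatrixW0}, under Assumption \ref{assump:S}, which furnishes the common factor $\Gamma$ used to absorb the disturbance block.
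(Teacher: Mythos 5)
Your proposal is correct and follows essentially the same route as the paper's proof: both derive $X_1=\begin{bmatrix} A & B & E\Gamma\end{bmatrix}\begin{bmatrix} Z_0^{\top} & U_0^{\top} & M_0^{\top}\end{bmatrix}^{\top}$ and $E_0=\begin{bmatrix} C & 0 & F\Gamma\end{bmatrix}\begin{bmatrix} Z_0^{\top} & U_0^{\top} & M_0^{\top}\end{bmatrix}^{\top}$ and then use the block columns of \eqref{condition:CLdatarep} to obtain $X_1G_1=A+BK$, $X_1G_2=B$, and $E_0G_1=C$. Your added remark that the bottom block $M_0G_1=0$, $M_0G_2=0$ is what annihilates the unknown $E\Gamma$ and $F\Gamma$ terms is exactly the mechanism the paper's computation relies on, merely stated more explicitly.
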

\begin{proof}
    By the dynamics of system \eqref{system:NL}, the data matrices satisfy that
    \begin{align*}
        X_1 & = AZ_0 + BU_0 +EW_0\\
        E_0 & = CZ_0 +FW_0
    \end{align*}
    where $W_0=\Gamma M_0$ based on the analysis in Section \ref{sec:OPR&IP}. Under \eqref{condition:CLdatarep}, it holds that
    \begin{align*}
             A&+BK\\
            &\! = \!
        \big[ A ~ B ~ E\Gamma \big]\begin{bmatrix}
            I_{n_Z}\\  K \\ 0
        \end{bmatrix} = 
         \big[ A ~ B ~ E\Gamma \big]\begin{bmatrix}
        Z_0 \\ U_0 \\ M_0
    \end{bmatrix} G_1  =  X_1G_1, 
    \end{align*} 
    {and similarly, $B=X_1G_2$ and $C=E_0G_1$.} Recall the dynamics of \eqref{system:CLnonlinear}, one can obtain the data-based system matrices given in \eqref{datarepresentation:Zx}.
\end{proof}

Having represented \eqref{system:CLData} using data set $\mathcal DS$, we give the following result for the data-driven design of controller \eqref{controller:Zx}. 

\begin{theorem}\label{theorem:DDIPZx}
     Consider system \eqref{system:NL}, state feedback controller \eqref{controller:Zx}, and data set $\mathcal{DS}$. If there exist matrices $Y\in\mathbb R^{T\times n_Z}$, $G_2\in\mathbb R^{T\times m}$, and positive definite matrix $ P=\mathrm{blockdiag}(P_1,P_2)\in\mathbb R^{n_Z\times n_Z}$ with $P_1\in\mathbb R^{n\times n}$ and $P_2\in\mathbb R^{(n_Z-n)\times (n_Z-n)}$, such that
     \begin{subequations}\label{condition:DDIPZx}
        \begin{align}
        \begin{bmatrix}
            P \\ 0_{q\times n_Z}  \end{bmatrix} &=\begin{bmatrix}
        Z_0 \\ M_0
    \end{bmatrix}Y & \label{condition:ZxY1Y2} \\
       \begin{bmatrix}
            0_{n_Z\times m} \\ I_m \\ 0_{q\times m}
        \end{bmatrix} &= \begin{bmatrix}
        Z_0 \\ U_0 \\ M_0
    \end{bmatrix}G_2  \label{condition:ZxG3} \\
   \mathcal{I}^{\top}X_1Y +(*)^{\top}& \preceq 0  \label{condition:ZxAP} \\
    \begin{bmatrix}
        (X_1G_2)^{\top} & 0_{m\times (n_Z-n)}
    \end{bmatrix}&=  E_0Y, \label{condition:ZxBC}
        \end{align}
    \end{subequations}
    then controller \eqref{controller:Zx} with 
    \begin{align} \label{control: data_driven_expression}
        K=U_0YP^{-1}
    \end{align}
    renders the closed-loop system incrementally passive with respect to input $v$ and output $e$ with a regular storage function. 
\end{theorem}

\begin{proof}
   Let $G_1= YP^{-1}$, then conditions \eqref{condition:ZxY1Y2} and \eqref{condition:ZxG3} guarantee the existence of the data-based closed-loop representation \eqref{datarepresentation:Zx}. 

   Left- and right-multiply $P^{-1}$ to both sides of \eqref{condition:ZxAP} gives
 \begin{align*}
    &\quad~ P^{-1}\mathcal{I}^{\top}
        X_1Y P^{-1}+(*)^{\top}\\
    & = P^{-1}\mathcal{I}^{\top}
        X_1G_1 P  P^{-1}+(*)^{\top}\\
    &= \begin{bmatrix}
        P_1^{-1} & 0 \\ 0 & P_2^{-1}
    \end{bmatrix}\mathcal I^{\top}
        X_1G_1 +(*)^{\top}\\
    &= \begin{bmatrix}
        P_1^{-1} \\ 0
    \end{bmatrix} 
        X_1G_1 +(*)^{\top} \preceq 0 
    \end{align*} 
   Define the regular storage function as $V(x_1,x_2)=\frac{1}{2}(x_1-x_2)^{\top}P_1^{-1} (x_1-x_2)$. By the definition of $\mathcal I$ and the data-based representation $A_{\mathrm{d}}$, one has that
   \begin{align*}
       \mathcal I^{\top}P_1^{-1} A_{\mathrm{d}} = \begin{bmatrix}
        P_1^{-1} \\ 0
    \end{bmatrix} 
        X_1G_1 .
   \end{align*}
   Therefore, condition \eqref{condition:ZxAP} ensures that \eqref{condition:IPZxA} in Lemma \ref{lemma:IPZx} holds.

  At the both sides of condition \eqref{condition:ZxBC}, we right-multiply $P^{-1}$ and obtain
   \begin{align*}
        \begin{bmatrix}
        (X_1G_2)^{\top}P_1^{-1} & 0_{m\times (n_Z-n)}
    \end{bmatrix}&=
        E_0G_1 .
   \end{align*}
   Again, by the definition of $\mathcal I$ and the data-based representations $B_{\mathrm{d}}$ and $C_{\mathrm{d}}$, one has that
   \begin{align*}
       \mathcal I^{\top}P_1^{-1} B_{\mathrm{d}} &=\begin{bmatrix}
           (X_1G_2)^{\top}P_1^{-1} &  0_{m\times (n_Z-n)}
       \end{bmatrix}^{\top} \\
       C_{\mathrm{d}}  &=E_0G_1.
   \end{align*}
   Therefore, condition \eqref{condition:ZxBC} ensures that \eqref{condition:IPZxBC} in Lemma \ref{lemma:IPZx} holds, and thus the proof is complete.
\end{proof} 

{ 
\begin{remark} [Data requirement and computational complexity] \label{remark:DataRequirement}
To ensure that a data-based closed-loop system representation in Lemma \ref{lemma:data_driven_repre} exists, a sufficient condition is that the data matrix $\begin{bmatrix} Z_0^\top & U_0^\top & M_0^\top \end{bmatrix}^\top \in \mathbb{R}^{(n_Z +m+q) \times T}$ has full row rank. This rank condition holds only if $T\ge n_Z+m+q$. After data collection, this rank condition can be directly verified. If the condition is not satisfied, one may either collect additional data or redesign the input signal used for the experiment.

The computational complexity of \eqref{condition:DDIPZx} depends on the number of basis functions $n_Z$, input dimension $m$, and data length $T$. Increasing $T$ may help satisfy the rank condition, but it also increases the size of the decision variables and the LMIs in \eqref{condition:DDIPZx}. Therefore, one should choose $T$ to be sufficiently large to satisfy the rank condition, while avoiding unnecessarily long data sequences that increase computational cost.
\end{remark}
\vspace{-1ex}

\begin{remark}[Nonlinearity elimination]
Theorem \ref{theorem:DDIPZx} handles the nonlinear terms $Q(x)$ by assuming known nonlinearities and eliminating their effect, which enables global incremental passivation. We note that the feasibility of perfect nonlinearity elimination depends on whether these nonlinearities are in matched channels with the control input. Relaxing this matching-condition requirement via backstepping techniques and further relaxing the assumptions of the known nonlinearities are important directions for our future research.
\end{remark}
}

In this work, we assume that the experimental data are not subject to measurement noise. In fact, if the measurement noise can be generated by an exosystem in the form of \eqref{exosystem}, such as constant or periodic noise with known frequencies, then the proposed design is still applicable by updating the exosystem to include the noise. { When a more general class of noise is considered or the state derivative is estimated via numerical differentiation, data-based dynamics \eqref{system:CLData} will contain uncertainties, and a robust passivation approach is needed.} The difficulties of robust passivation under the current framework and possible solutions have been discussed in Remark \ref{remark:RobustPassivation}.


\section{Data-driven output regulator design}
\label{sec_DDORP}

The previous section addresses how to design a data-driven feedback controller that results in an incrementally passive closed-loop system. As incremental passivity describes the relationship between any two arbitrary trajectories of a forced system, it is an effective tool for solving output regulation problems. This section designs an incrementally passive internal model and interconnects it with the closed-loop system to obtain an incrementally passive augmented system. Then, we present the design of the regulator that solves the proposed output regulation problem.

We design the incrementally passive internal model as
\begin{align}\label{internalmodel}
    \dot \eta &= S \eta + \alpha \Xi \tilde e \notag \\
    \tilde v &= \Xi^{\top}\eta
\end{align}
where $\eta\in\mathbb R^q$, $\tilde e=-e$, and any fixed $\alpha>0$. We select $\Xi\in\mathbb R^{q\times m}$ as any nonzero matrix such that the pair $(S, \Xi)$ is controllable. The corresponding regular storage function is $V_{\mathrm{IM}}(\eta_1,\eta_2):=\frac{1}{2\alpha}\|\eta_1-\eta_2\|^2$. Figure \ref{fig:diagramCL} illustrates the feedback interconnection of the closed-loop system and internal model \eqref{internalmodel}. 

\begin{figure}[h]
\centering 
\includegraphics[trim=3.7cm 17.8cm 3.7cm 3.5cm, clip, scale=0.65]{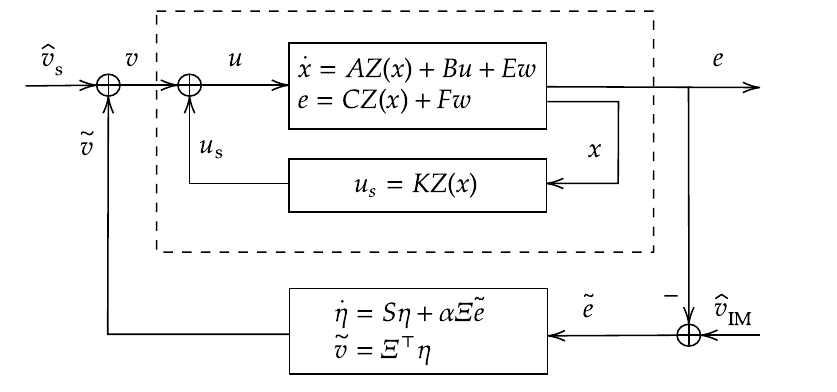}
\caption{Interconnection of the closed-loop system and the internal model.}
\label{fig:diagramCL}
\end{figure}

\begin{theorem}\label{theorem:nl}
    Under Assumptions \ref{assump:S} and \ref{assump:RE}
    , consider system \eqref{system:NL}, exosystem \eqref{exosystem}, and data set $\mathcal{DS}$. If there exist matrices $Y\in\mathbb R^{T\times n_Z}$, $G_2\in\mathbb R^{T\times m}$, and positive definite matrix $P=\mathrm{diag}(P_1,P_2)\in\mathbb R^{n_Z\times n_Z}$ with $P_1 \in\mathbb R^{n\times n}$ and $P_2 \in\mathbb R^{(n_Z-n)\times (n_Z-n)}$, such that \eqref{condition:DDIPZx} holds, then the regulator
    \begin{align}\label{controller:overall}
        \dot \eta &= S\eta -\alpha\Xi e \notag\\
        u &= KZ(x)+\Xi^{\top}\eta -\widehat K e. 
    \end{align}
    where $K=U_0Y P^{-1}$, any $\alpha>0$, and any positive definite $\widehat K\in\mathbb R^{m\times m}$, solves the output regulation problem.
\end{theorem}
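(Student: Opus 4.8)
The plan is to follow the incremental-passivity route to output regulation from \cite{Pavlov2008IP&OPR}: exhibit the regulator as the interconnection of an incrementally passive plant and an incrementally passive internal model, turn this interconnection into a strictly passive map from the error via the feedback $-\widehat Ke$, and then compare the actual trajectory against a zero-error steady-state trajectory furnished by the regulator equations. First I would invoke Theorem \ref{theorem:DDIPZx}: since \eqref{condition:DDIPZx} holds with $K=U_0YP^{-1}$, the plant under $u=KZ(x)+v$ is incrementally passive from $v$ to $e$ with regular storage function $V_1(x_1,x_2)=\tfrac{1}{2}(x_1-x_2)^\top P_1^{-1}(x_1-x_2)$. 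For the internal model \eqref{internalmodel}, differentiating $V_{\mathrm{IM}}$ and using the skew-symmetry $S+S^\top=0$ from Assumption \ref{assump:S} gives $\dot V_{\mathrm{IM}}=(\tilde v_1-\tilde v_2)^\top(\tilde e_1-\tilde e_2)$, so it is incrementally passive from $\tilde e=-e$ to $\tilde v=\Xi^\top\eta$.

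I would then apply Lemma \ref{Lemma:interconnectionIP} with gain $\beta=I$, so that the interconnection of the two subsystems is incrementally passive from the external input $\bar v=[v_x^\top\ v_z^\top]^\top$ to the output $\bar y=[e^\top\ \tilde v^\top]^\top$, with regular storage function $\bar V=V_1+V_{\mathrm{IM}}$. The regulator \eqref{controller:overall} is exactly this interconnection closed with $v_x=-\widehat Ke$ and $v_z=0$, which produces the virtual input $v=\Xi^\top\eta-\widehat Ke$. Substituting these choices into the incremental-passivity inequality $\dot{\bar V}\le(\bar y_1-\bar y_2)^\top(\bar v_1-\bar v_2)$ and writing $\chi=(x,\eta)$ yields
\[
\dot{\bar V}(\chi_1,\chi_2)\le-(e_1-e_2)^\top\widehat K(e_1-e_2)\le 0,
\]
since $\widehat K\succ 0$.

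Next I would construct the comparison trajectory with vanishing error. By Assumption \ref{assump:RE} I set $x_2(t)=\mathbf x(w(t))$, so the second regulator equation in \eqref{RegulatorEquation} gives $e_2\equiv 0$ and the internal model collapses to $\dot\eta_2=S\eta_2$. Inserting $x_2$ and $e_2=0$ into the closed-loop state equation and matching with $\tfrac{d}{dt}\mathbf x(w)=\tfrac{\partial\mathbf x}{\partial w}Sw$ through the first regulator equation forces the internal-model output to reproduce the steady-state feedforward, $\Xi^\top\eta_2=\mathbf u(w)-KZ(\mathbf x(w))$, for some solution $\eta_2$ of $\dot\eta_2=S\eta_2$. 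Proving that such an $\eta_2$ exists --- equivalently, that the required feedforward signal lies in the space generated by the pair $(\Xi^\top,S)$ --- is the internal-model-principle step and is where I expect the main difficulty to lie, as it is precisely here that the design of $\Xi$ together with the marginal stability of $S$ (Assumption \ref{assump:S}) must be exploited; the passivity argument only certifies convergence once this zero-error trajectory is known to exist.

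Finally I would take $\chi_2=(\mathbf x(w),\eta_2)$, so the dissipation inequality becomes $\dot{\bar V}\le-e_1^\top\widehat Ke_1\le 0$ and $\bar V$ is nonincreasing, hence bounded. Because $\mathbf x(w)$ is bounded on the compact forward-invariant set $\mathbb W$ and $\eta_2$ is bounded as a solution of the marginally stable $\dot\eta_2=S\eta_2$, regularity of $\bar V$ forces the actual state $\chi_1=(x,\eta)$ to remain bounded, giving boundedness of all closed-loop solutions. Integrating the inequality gives $\int_0^\infty e^\top\widehat Ke\,dt<\infty$; since boundedness of $(x,\eta,w)$ renders $e=CZ(x)+Fw$ and $\dot e$ bounded, $e^\top\widehat Ke$ is uniformly continuous, and Barbalat's lemma then yields $\lim_{t\to\infty}e(t)=0$, which solves the data-driven output regulation problem.
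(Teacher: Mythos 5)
Your proposal follows essentially the same route as the paper's proof: incremental passivity of the passivated plant via Theorem \ref{theorem:DDIPZx}, incremental passivity of the internal model \eqref{internalmodel} from the skew-symmetry of $S$, Lemma \ref{Lemma:interconnectionIP} (with $\beta=I$) for the interconnection, the static feedback $-\widehat K e$ on the first channel to obtain $\dot V_{\mathrm{aug}}\le -(e_1-e_2)^{\top}\widehat K(e_1-e_2)$, comparison with a bounded zero-error steady-state trajectory, boundedness from regularity of the storage function, and Barbalat's lemma. The one step you leave open --- existence of a solution $\eta_{\mathrm{ss}}$ of $\dot\eta_{\mathrm{ss}}=S\eta_{\mathrm{ss}}$ whose output $\Xi^{\top}\eta_{\mathrm{ss}}$ reproduces the required feedforward $\mathbf u(w)-KZ(\mathbf x(w))$ --- is exactly the step the paper dispatches in a single sentence, asserting that under Assumption \ref{assump:RE} the augmented system \eqref{system:augument} admits a bounded solution $\zeta_{\mathrm{ss}}=(\mathbf x(w),\eta_{\mathrm{ss}})$ with zero error and hence $\hat v_{\mathrm{ss}}=0$. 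Relative to the paper's own argument you are therefore not missing anything; if anything, you have correctly isolated the point where the argument is thinnest. Your concern is substantive: the regulator equations only guarantee the open-loop feedforward $\mathbf u(w)$, and $\mathbf u(w)-KZ(\mathbf x(w))$ is in general a nonlinear function of $w$ whose harmonic content need not lie in the spectrum of $S$, so reproducibility by the pair $(\Xi^{\top},S)$ requires an additional internal-model-principle argument (or the observation that condition \eqref{condition:ZxAP} forces cancellation of the nonlinear block, linearizing the closed loop) that neither you nor the paper spells out. Apart from that shared omission, your proof matches the paper's step for step.
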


\begin{proof}
    Firstly, as illustrated in Figure \ref{fig:diagramCL}, the closed-loop system and the internal model are interconnected via
    \begin{equation} \label{interconnection}
        v = \tilde v + \hat{v}_\text{s}, \ \widetilde e = -e+\hat{v}_\text{IM}.
    \end{equation}
    The resulting interconnected system has the following dynamics 
    \begin{subequations}\label{system:augument}
        \begin{align}
    \dot \zeta &= \begin{bmatrix}
        A+BK & B \Xi^{\top}\\
        -\alpha\Xi C & S
    \end{bmatrix} \widetilde Z(\zeta)+\begin{bmatrix}
    B & 0_{n\times m} \\0_{q\times m} & \alpha \Xi
    \end{bmatrix} \hat{v} \\
    &\quad~+\begin{bmatrix}
        E \\ -\alpha \Xi F
    \end{bmatrix}w \notag\\
    \hat{e} & = \begin{bmatrix}
        C & 0_{n \times q} \\ 0_{m\times n_z} & \Xi^{\top}
    \end{bmatrix}\widetilde Z(\zeta) +\begin{bmatrix}
        F \\0_{m \times q} 
    \end{bmatrix}w
 \end{align} 
    \end{subequations}
where $\zeta:=[x^\top \ \eta^\top]^\top$, $ \widetilde Z(\zeta) =  \begin{bmatrix}
        Z(x)^\top & \eta ^\top
    \end{bmatrix}^\top$, input $\hat v:=\begin{bmatrix} \hat v_\text{s}^{\top} & \hat v_\text{IM}^{\top} \end{bmatrix}^{\top}$, and output $\hat e :=\begin{bmatrix} e^{\top} & \tilde v^{\top}\end{bmatrix}^{\top}$.
    
    By Theorem \ref{theorem:DDIPZx}, the designed feedback controller \eqref{controller:Zx} satisfying \eqref{condition:DDIPZx} renders the closed-loop system incrementally passive with input $v$, output $e$, and regular storage function $V(x_1,x_2)$. Additionally, internal model \eqref{internalmodel} is incrementally passive with input $\tilde e$, output $\tilde v$, and regular storage function $V_\text{IM}(\eta_1,\eta_2)$. Therefore, by Lemma \ref{Lemma:interconnectionIP}, augmented system \eqref{system:augument} is incrementally passive with input $\hat v$, output $\hat e$, and regular storage function $V_\text{aug}$ defined as
     \begin{align}\label{equ_stroge_aug}
        V_{\text{aug}}(\zeta_1, \zeta_2):=(\zeta_1-\zeta_2)^\top \begin{bmatrix}
        \frac{1}{2}P_1^{-1} & 0_{n \times q} \\0_{q \times n} & \frac{1}{2 \alpha}I_{q}
    \end{bmatrix}(\zeta_1-\zeta_2).
    \end{align}
    Close the loop of \eqref{system:augument} by $\hat{v}=-\overline{K}\hat{e}$ where
    \begin{align}\label{controller:finalfb}
        \hat v = \begin{bmatrix}
            \hat v_\text{s} \\ \hat v_\text{IM}
        \end{bmatrix} = -\underbrace{\begin{bmatrix}
            \widehat K & 0_{m\times m} \\ 0_{m\times m} & 0_{m\times m}
        \end{bmatrix}}_{{:=\overline{K}}}  \begin{bmatrix}
            e \\ \tilde v
        \end{bmatrix} = \begin{bmatrix}
            -\widehat K e \\0_{m\times 1}
        \end{bmatrix}
    \end{align}
    with any $\widehat K\succ 0$, which leads to the overall regulator \eqref{controller:overall}. 
    
    Next, we prove that \eqref{controller:overall} solves the output regulation problem. {The designed passivation controller eliminates the nonlinear effects of the plant \eqref{system:NL}, and hence results in linear closed-loop state dynamics. Under Assumption \ref{assump:RE}
    , the closed-loop system \eqref{system:CLnonlinear} with input $v$ and output $e$ has the steady-state solution $\mathbf x(w)=\mathcal Xw$ and $\mathbf v(w)=\mathcal Vw$ for some $\mathcal V\in\mathbb R^{m\times q}$ corresponding to $e=0$. Therefore, system \eqref{system:augument} admits a bounded steady-state solution 
    $\zeta_{\mathrm{ss}}(t) = \big( \mathbf x(w), \eta_{\mathrm{ss}}(t)\big)$ corresponding to steady-state error $e=0$.}
    Recalling that \eqref{system:augument} is incrementally passive and the incremental passivity inequality \eqref{equ:incre_pass_ineq}, substitute  $\big(\zeta(t), \hat{v}(t), \hat{e}(t) \big)$ for $(x_1, y_1, u_1)$ and substitute steady-state solution $\big(\zeta_{\mathrm{ss}}(t), \hat v_{\mathrm{ss}}(t), \hat{e}_{\mathrm{ss}}(t) \big)$ where ${\hat e_\text{ss}}(t) := \begin{bmatrix}0_{1 \times m}&  \eta^\top_{\mathrm{ss}}(t)\Xi  \end{bmatrix}^\top$ for $(x_2, u_2, y_2) $. 
    Then, by the definitions of $\overline K$, $\hat e(t)$, and $\hat e_\text{ss}(t)$, 
    we derive that
    \begin{align*}
      \dot{V}_{\text{aug}}\big(\zeta(t), {\zeta}_{\mathrm{ss}}(t)\big) & \leq -\big(\hat{e}(t)-{\hat e_\text{ss}}(t) \big)^\top\overline K \big(\hat{e}(t)-{\hat e_\text{ss}}(t) \big) \notag \\
     & = -e(t)^\top \widehat K e(t) \le 0.
    \end{align*} 
    Then, we can further obtain that
    \begin{align*}
       &\quad~~ V_{\text{aug}}\big(\zeta(t), {\zeta}_{\mathrm{ss}}(t)\big) -V_{\text{aug}}\big(\zeta(t_0), {\zeta}_{\mathrm{ss}}(t_0)\big)\\
       &\leq -\int_{t_0}^t {e}(\tau)^\top\widehat K{e}(\tau) d\tau 
        \leq 0.
    \end{align*}
Thus, it holds that $V_{\text{aug}}\big(\zeta(t), {\zeta}_{\mathrm{ss}}(t)\big) \leq V_{\text{aug}}\big(\zeta(t_0), {\zeta}_{\mathrm{ss}}(t_0)\big)$. Boundedness of $V_{\text{aug}}\big(\zeta(t),{\zeta}_{\mathrm{ss}}(t)\big)$ and ${\zeta}_{\mathrm{ss}}(t)$ implies that $\zeta(t)$ is bounded for all $t\ge0$. Therefore, by the definition of $\zeta(t)$, we conclude that $x(t)$ is bounded on $\mathbb{R}_{\geq 0}$. Besides, under Assumption \ref{assump:S}, $w(t)$ and $\dot w(t)$ are bounded on $\mathbb{R}_{\geq 0}$. Thus, recalling the dynamics of plant \eqref{system:NL}, $\dot x(t) $ is bounded on $\mathbb{R}_{\geq 0}$. It then can be deduced that the derivative of ${e}(t)$ is bounded, which implies the uniform continuity of ${e}(t)^\top \widehat K{e}(t)$ on $\mathbb{R}_{\geq 0}$. As we have shown that $\dot V_{\text{aug}}\big(\zeta(t), {\zeta}_{\mathrm{ss}}(t)\big)<0$ when $\zeta(t) \neq {\zeta}_{\mathrm{ss}}(t)$, it holds that 
\begin{equation*}
    \int_{t_0}^{+\infty} {e}(\tau)^\top \widehat K{e}(\tau) d\tau \leq V_{\text{aug}}\big(\zeta(t_0), \zeta_{\mathrm{ss}}(t_0)\big) < +\infty. 
\end{equation*}
Via Barbalat's lemma \cite[Lemma 8.2]{book_Khalil}, we conclude that $\lim_{t\rightarrow +\infty} {e}^\top \widehat K {e}(t) = 0$. As $\widehat K\succ 0$, it holds that $\lim_{t\rightarrow +\infty}e(t)=0$. Thus, the output regulation problem is solved by regulator \eqref{controller:overall}.
\end{proof}

{One feature of the proposed data-driven output regulation approach is that the designs of controller \eqref{controller:Zx} and internal model \eqref{internalmodel} are decoupled, which offers high data efficiency and design flexibility.} This design is different from the traditional output regulation framework where the controller is designed for the entire augmented system. In the data-driven setting, obtaining and controlling the augmented system can further complicate the problem. For example, the signal generated by the internal model also needs to be sampled to derive the data-based representation of the augmented system in \cite{Hu2024DDOPR}. Moreover, if the disturbance affecting the plant or the reference changes, i.e., the exosystem changes, one must redesign the entire regulator based on the redesigned internal model. In contrast, the proposed approach does not require a complete redesign of the regulator, as conditions \eqref{condition:DDIPZx} characterizing the incremental passivation controller remain the same, and only the data set $\mathcal{DS}$ needs to be recollected to capture the effect of the new exosystem on the plant. 
\vspace{-1ex}



{
\begin{remark}[Linear internal model]
    The existence of $\mathcal X$ in Assumption \ref{assump:RE} guarantees the sufficiency of the linear internal model in \eqref{internalmodel} for solving the output regulation problem. In some cases, such as the stabilization problem addressed in Section \ref{sec:stabilization} or when the regulation error depends linearly on the state, this assumption readily holds. Nonetheless, it is of our interest to relax the assumption by designing a nonlinear internal model.
\end{remark}
\vspace{-0.7em}

\begin{remark}[Comparison with existing data-driven methods]
    For systems having unknown dynamics, various approaches have been applied to output regulation, such as adaptive dynamic programming \cite{gao2017ADPNonlinear}, neural networks \cite{lan2007NNOPR}, and reinforcement learning \cite{jiang2019RLOPR}. While adaptive dynamic programming often requires a stabilizing controller to initiate the algorithm, and neural networks require a large amount of data and may lack theoretical guarantees, our proposed approach seeks an alternative that removes these restrictions at the price of offline open-loop data and knowledge on the nonlinearities. We aim to relax the assumptions of the system dynamics and the experimental data in our future research. 
    
Theorem \ref{theorem:nl} differs from the contraction-based approach in \cite{Hu2024DDOPR} in disturbance requirement, regulation performance, data requirement, and nonlinearity assumptions. 
By enforcing incremental passivity on the closed-loop system, the proposed approach achieves global asymptotic tracking despite time-varying disturbances that are not necessarily periodic. Thanks to the decoupled designs, we do not need to collect data from the internal model. On the other hand, the contraction-based approach in \cite{Hu2024DDOPR} assumes periodic references and disturbances. The regulator design therein requires data from the internal model. The approach guarantees a bounded regulation error by suppressing finitely many harmonics of the periodic disturbance. We also note that this work handles nonlinearity differently from \cite{Hu2024DDOPR}. While this work assumes that perfect nonlinearity cancellation is feasible, \cite{Hu2024DDOPR} assumes known global Jacobian bounds on the nonlinearities.
\end{remark}
}

In what follows, we provide a numerical example of the proposed data-driven output regulation approach.

\textbf{Example 1.} Consider a simple pendulum that has the dynamics
\begin{align*}
    \dot x_1 &= x_2+d_1\\
    \dot x_2 & = -10\sin(x_1)-x_2+10u +d_2 \\
    e & = x_2-r,
\end{align*}
where $x_1$ and $x_2$ are the angle and the angular velocity of the pendulum, and $u$ is the applied torque. We denote the disturbance as $d=[d_1 ~~ d_2]^{\top}$ and the desired reference as $r$, both of which are linear functions of the exogenous signal $w$. In this example, $w$ is governed by the dynamics
\begin{align} \label{sys: e.g. exosys}
    \dot w &=\begin{bmatrix}
        0 & 2 &0 \\ -2 & 0 & 0 \\ 0 & 0 & 0
    \end{bmatrix} w,
\end{align}
with the initial condition $w(t_0)=[0~~1~~1]^{\top}$. The disturbances and reference are $d_1=\cos(2t+\frac{\pi}{3})$, $d_2=1$ and $r=\sin (2t)$.
It is noted that the explicit dynamics are only used for data collection.

Let $Z(x)=\begin{bmatrix} x_1 & x_2 & \sin(x_1)\end{bmatrix}^{\top}$ and conduct an experiment with $x(t_0)=[-0.1~~0.1]^{\top}$, and $u=\sin(t)$. {The data set is collected with the sampling period of $0.5$s, and the length of the data matrices is $T=20$. Solving condition \eqref{condition:DDIPZx} using YALMIP \cite{Loefberg2004} with the MOSEK solver \cite{Mosek2023} gives $K=[-0.1647 ~~ 0.0269 ~~ 1.0000]$. Setting $\alpha=5$, $\Xi=[1~~0~~1]^{\top}$, and $\widehat K=0.5$, we obtain the overall regulator.} Simulation results of regulation error $e$ from different initial conditions are illustrated in Figure \ref{fig:pendulum}, which show that the errors converge to $0$ in all cases.

\begin{figure}[h]
\centering 
\includegraphics[trim=1.8cm 7.1cm 2cm 7.5cm, clip, scale=0.45]{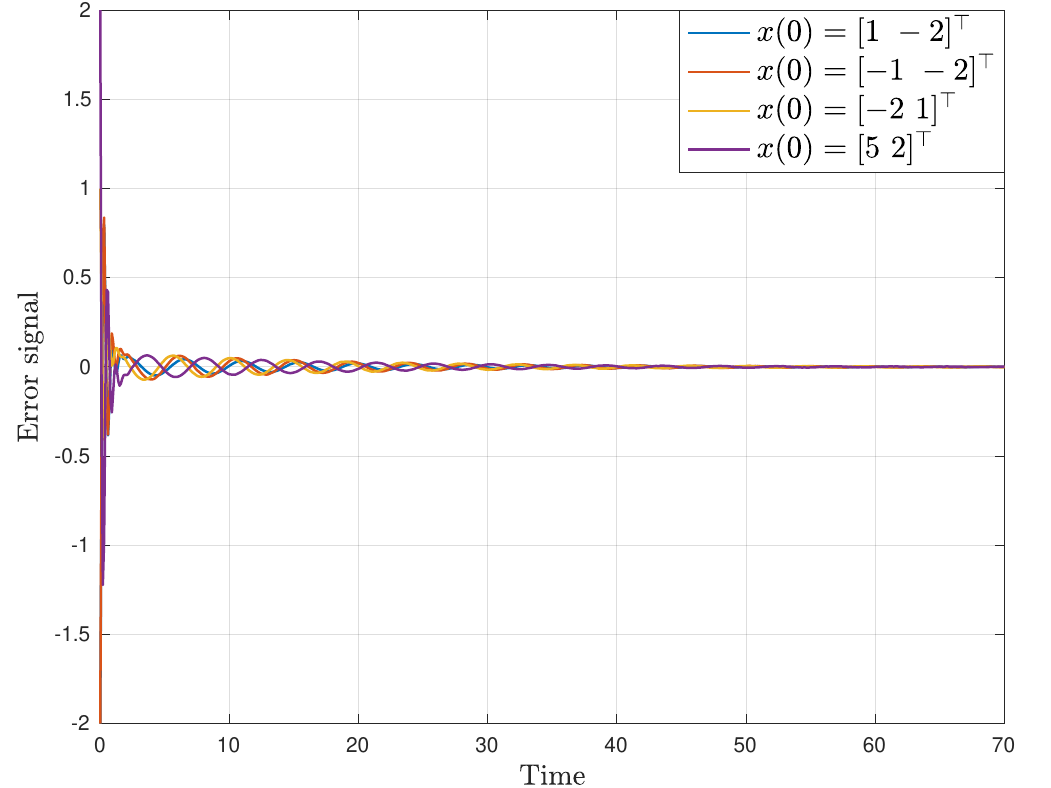}
\caption{{Regulation error of the simple pendulum with various initial conditions.}}
\label{fig:pendulum}
\end{figure}

\section{Data-driven stabilization of non-zero equilibrium}
\label{sec:stabilization}

In industrial applications, common problems such as set-point tracking require stabilizing a non-zero equilibrium. In model-based control, this is often achieved by first performing coordinate transformation on the system model such that the non-zero equilibrium is converted to the origin, and then designing stabilizers for the converted system. However, in the data-driven setting, executing coordinate transformation is challenging, as the equilibrium input is difficult to obtain due to the unknown dynamics. In this section, we use the proposed data-driven output regulation approach for data-driven stabilization of a non-zero equilibrium without knowledge of the equilibrium input. In particular, the designed data-driven controller stabilizes the non-zero equilibrium asymptotically without estimating the unknown equilibrium input despite the presence of uncertain disturbances.

We consider a nonlinear plant with a single input
\begin{align}\label{system:SISO}
    \dot x = A_{\mathrm{p}}Z_{\mathrm{p}}(x) + Bu +E_{\mathrm{p}}w_{\mathrm{p}}
\end{align}
where state $x\in \mathbb{R}^n$, control input $u \in \mathbb{R}$, and $w_{\mathrm{p}} \in \mathbb{R}^{p}$ is the disturbance generated by $\dot w_{\mathrm{p}} = S_{\mathrm{p}} w_{\mathrm{p}}$, satisfying Assumption \ref{assump:S}. 
The control objective is to stabilize the non-zero equilibrium $x_{\mathrm{e}}:= \begin{bmatrix} x_{\mathrm{e}1} & x_{\mathrm{e}2}& \cdots& x_{\mathrm{e}n}\end{bmatrix}^\top$ for system \eqref{system:SISO}. It is emphasized that the corresponding equilibrium input $u_{\mathrm{e}}$ such that 
\begin{align}\label{condi:assignable_equilibrium}
    0 = A_{\mathrm{p}} Z_{\mathrm{p}} (x_{\mathrm{e}})+Bu_{\mathrm{e}}+E_{\mathrm{p}}w_{\mathrm{p}}
\end{align}
is not known. To achieve the control objective, we apply the proposed data-driven output regulation approach, where the output is a virtual error $e_{\mathrm{v}} \in \mathbb{R}$ defined as
\begin{align}
e_{\mathrm{v}} = \|x-x_{\mathrm e}\|^2= \sum_{i=1}^n (x_i - x_{\mathrm{e}i})^2, \label{virtual_error:form squared}
\end{align}
which can be equivalently written as
\begin{align}
e_{\mathrm{v}} = C_{\mathrm{v}} Z_{\mathrm{v}}(x) + w_{\mathrm{v}} \label{virtual_error:form standard}
\end{align}
where $Z_{\mathrm{v}}(x) \in \mathbb{R}^{2n}$ contains all monomials appearing in \eqref{virtual_error:form standard} 
and 
$w_{\mathrm{v}}=||x_{\mathrm{e}}||^2$. 

Let $\widetilde Z(x)$ be a vector that contains all terms in $Z_{\mathrm{p}}(x)$ and $Z_{\mathrm{v}}(x)$, $\widetilde w:=[w_{\mathrm{p}}^\top~~w_{\mathrm{v}}]^\top$, $\widetilde E:=[E_{\mathrm{p}}~~0_{n \times 1} ]$, and $\widetilde F:=[0_{1\times p}~~1]$. The system governed by the dynamics \eqref{system:SISO} and virtual output \eqref{virtual_error:form standard} can be written into the form of \eqref{system:NL} as 
    \begin{align}\label{system:non_zero stab}
    \dot x & = \widetilde A\widetilde Z(x) + Bu + \widetilde E\widetilde w \notag \\ 
     e_{\mathrm{v}} &= \widetilde C \widetilde Z(x) + \widetilde F \widetilde w, 
\end{align}
where $\widetilde C$ is chosen such that $\widetilde C\widetilde Z(x)=C_{\mathrm{v}}Z_{\mathrm{v}}(x)$. Moreover, $\widetilde w$ is governed by dynamics 
\begin{align}\label{exosystem:extended}
    \dot{\widetilde w}= \begin{bmatrix}
        \dot w_{\mathrm{p}} \\ \dot w_{\mathrm{v}}
    \end{bmatrix} = \underbrace{\begin{bmatrix}
    S_{\mathrm{p}} & 0_{p \times 1}\\
    0_{1 \times p} & 0
\end{bmatrix}}_{{:=\widetilde S}} \begin{bmatrix}
        w_{\mathrm{p}} \\ w_{\mathrm{v}}
    \end{bmatrix}.
\end{align}

We have formulated the stabilization problem of \eqref{system:SISO} at $(x_{\mathrm{e}},u_{\mathrm{e}})$ into the same form as the output regulation problem with a known $\widetilde C$. In the following result, we apply the proposed data-driven regulator design approach using the data set $\widetilde{\mathcal{DS}}:=\{\dot x(t_k), x(t_k), u(t_k), k=0,1,\dots,T-1\}$ collected from \eqref{system:non_zero stab}.

\begin{proposition}\label{proposition:nl_knownC}
     Consider system \eqref{system:SISO} and a non-zero equilibrium $x_\mathrm{e}$ {satisfying \eqref{condi:assignable_equilibrium}}, and obtain the extended system \eqref{system:non_zero stab}, the exosystem \eqref{exosystem:extended}, and a sampled data set $\widetilde{\mathcal{DS}}$. If there exist matrices $Y\in\mathbb R^{T\times n_Z}$, $G_2\in\mathbb R^{T}$, and positive definite matrix $P=\mathrm{diag}(P_1,P_2)\in\mathbb R^{n_Z\times n_Z}$ with $P_1\in\mathbb R^{n\times n}$ and $P_2\in\mathbb R^{(n_Z-n)\times (n_Z-n)}$, satisfying
    \begin{subequations}\label{condition:known_C}
        \begin{align}
        \begin{bmatrix}
            P \\ 0_{(p+1)\times n_Z} \end{bmatrix} &=\begin{bmatrix}
        Z_0 \\ M_0
    \end{bmatrix}Y \label{condition:Zx_knownC_Y1Y2} \\
       \begin{bmatrix}
            0_{n_Z\times 1} \\ 1 \\ 0_{(p+1)\times 1}
        \end{bmatrix} &= \begin{bmatrix}
        Z_0 \\ U_0 \\ M_0
    \end{bmatrix}G_2  \label{condition:Zx_knownC_G3} \\
   \mathcal{I}^{\top}
        X_1Y  +(*)^{\top}& \preceq 0  \label{condition:Zx_knownC_AP} \\
    \begin{bmatrix}
        (X_1G_2)^{\top} & 0_{1\times (n_Z-n)}
    \end{bmatrix}&=\widetilde C P \label{condition:Zx_knownC_BC}
        \end{align}
    \end{subequations}
    then the stabilizer 
\begin{align}\label{controller:non_zero_stab}
        \dot \eta &= \widetilde S\eta -\alpha\Xi e_{\mathrm{v}} \notag\\
        u &= K\widetilde Z(x)+\Xi^{\top}\eta -\widehat K e_{\mathrm{v}}
\end{align}
with $K=U_0Y P^{-1}$, any $\alpha>0$ and $\widehat K > 0$ renders the equilibrium $x_{\mathrm{e}}$ globally asymptotically stable. 
\end{proposition}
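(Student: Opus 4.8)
The plan is to reduce the statement directly to the data-driven machinery already in place, applying Theorem \ref{theorem:DDIPZx} and Theorem \ref{theorem:nl} to the extended system \eqref{system:non_zero stab} together with its exosystem \eqref{exosystem:extended}. First I would note that \eqref{system:non_zero stab} is already in the form \eqref{system:NL} and that $\widetilde S$ satisfies Assumption \ref{assump:S}, since the single zero eigenvalue carried by the constant component $w_{\mathrm v}$ preserves the zero-real-part spectrum of $S_{\mathrm p}$. I would then recognize conditions \eqref{condition:known_C} as the specialization of the incremental-passivation conditions \eqref{condition:DDIPZx} to this extended system with scalar input ($m=1$): \eqref{condition:Zx_knownC_Y1Y2}, \eqref{condition:Zx_knownC_G3}, and \eqref{condition:Zx_knownC_AP} reproduce \eqref{condition:ZxY1Y2}, \eqref{condition:ZxG3}, and \eqref{condition:ZxAP} up to the obvious dimensional bookkeeping (the exosystem dimension $p+1$ in place of $q$), while the only genuine change is \eqref{condition:Zx_knownC_BC}.

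The second step justifies this single modification. In Theorem \ref{theorem:DDIPZx} the output matrix is unknown and must be reconstructed as $C_{\mathrm d}=E_0G_1$; here $\widetilde C$ is known by the construction of the virtual error \eqref{virtual_error:form standard}, so one takes $C_{\mathrm d}=\widetilde C$ directly. Setting $G_1=YP^{-1}$ and right-multiplying \eqref{condition:Zx_knownC_BC} by $P^{-1}$ yields $\begin{bmatrix}(X_1G_2)^{\top}P_1^{-1} & 0_{1\times(n_Z-n)}\end{bmatrix}=\widetilde C$, which is exactly condition \eqref{condition:IPZxBC} of Lemma \ref{lemma:IPZx} with $\mathcal P=P_1^{-1}$, $B_{\mathrm d}=X_1G_2$, and $C=\widetilde C$. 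Combined with \eqref{condition:Zx_knownC_AP}, which delivers \eqref{condition:IPZxA} exactly as in the proof of Theorem \ref{theorem:DDIPZx}, this shows that $u=K\widetilde Z(x)+v$ with $K=U_0YP^{-1}$ makes the closed loop incrementally passive with respect to the virtual input $v$ and the virtual output $e_{\mathrm v}$, using the regular storage function $\tfrac12(x_1-x_2)^{\top}P_1^{-1}(x_1-x_2)$.

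Third, granting the solvability hypothesis discussed below, I would apply Theorem \ref{theorem:nl} to the extended system: interconnecting this incrementally passive closed loop with the internal model $\dot\eta=\widetilde S\eta-\alpha\Xi e_{\mathrm v}$ and closing the loop with $\widehat K\succ0$ yields the stabilizer \eqref{controller:non_zero_stab}, and the non-increasing augmented storage function together with the Barbalat argument in the proof of Theorem \ref{theorem:nl} gives boundedness of all trajectories and $\lim_{t\to\infty}e_{\mathrm v}(t)=0$. Since $e_{\mathrm v}=\|x-x_{\mathrm e}\|^2$, the latter limit is equivalent to $x(t)\to x_{\mathrm e}$; together with the Lyapunov-type bound supplied by the non-increasing storage function and the fact that the passivation conditions hold for all $x\in\mathbb R^n$, this establishes \emph{global} asymptotic stability of $x_{\mathrm e}$.

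The step I expect to demand the most care is verifying the solvability hypothesis (the analogue of Assumption \ref{assump:RE}) for the extended system, which is what licenses the use of Theorem \ref{theorem:nl}. The candidate steady state is the constant $\mathbf x(\widetilde w)\equiv x_{\mathrm e}$, for which $e_{\mathrm v}=0$; the corresponding steady-state input must satisfy $0=A_{\mathrm p}Z_{\mathrm p}(x_{\mathrm e})+B\mathbf u(\widetilde w)+E_{\mathrm p}w_{\mathrm p}(t)$ and is therefore time-varying and \emph{unknown}, since $A_{\mathrm p}$ and $E_{\mathrm p}$ are unknown. The key point to argue is that this input is precisely the equilibrium input $u_{\mathrm e}$ posited to exist, that it decomposes into a constant component and a component driven by $w_{\mathrm p}$, and that the internal model built from $\widetilde S=\mathrm{blockdiag}(S_{\mathrm p},0)$ reproduces both: the zero eigenvalue contributed by $w_{\mathrm v}=\|x_{\mathrm e}\|^2$ generates the constant part, while the $S_{\mathrm p}$ block generates the disturbance-matching part. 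This is the mechanism by which the non-zero equilibrium is stabilized without ever estimating $u_{\mathrm e}$, and it is exactly why the exosystem is augmented with the constant $w_{\mathrm v}$.
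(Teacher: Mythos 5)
Your proposal is correct and follows essentially the same route as the paper: the paper's own proof is a one-line reduction to Theorem \ref{theorem:nl}, obtained by replacing $E_0Y$ with $\widetilde CP$ (since $\widetilde C$ is known by construction of the virtual error) and $e$ with $e_{\mathrm v}$, and then concluding $x(t)\to x_{\mathrm e}$ from $e_{\mathrm v}(t)\to 0$. Your additional verification of the solvability hypothesis for the extended system and of Lyapunov stability via the nonincreasing storage function fills in details the paper leaves implicit, but it is the same argument.
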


\begin{proof}
     Substituting $E_0Y$ from Theorem \ref{theorem:nl} with $\widetilde CP$ and $e$ with $e_{\mathrm{v}}$, it follows that the output regulation problem of \eqref{system:non_zero stab} is solved. Consequently, it holds that $\lim_{t\rightarrow \infty} e_{\mathrm v}(t)=0$. By the definition of $e_{\mathrm{v}}$, $x(t)$ asymptotically converges to $x_{\mathrm{e}}$.
\end{proof}

    It is noted that nonlinear data-driven stabilization of non-zero equilibrium has been studied in recent work, such as \cite{Hu2024Contraction,verhoek2023DDDLPV,bisoffi2024setpoint}, and \cite{YLiu2024CDC}. When the equilibrium input $u_{\mathrm{e}}$ is not readily available, an integral action is incorporated into the controller in \cite{verhoek2023DDDLPV} and \cite{Hu2024Contraction}, and $u_{\mathrm{e}}$ is estimated via optimization in \cite{bisoffi2024setpoint}. Proposition \ref{proposition:nl_knownC} overcomes the challenge of unknown $u_{\mathrm{e}}$ 
    using an internal model that also contains integral actions. Nevertheless, the result shown in this section improves the existing results in the sense that it does not require the knowledge or estimation of the equilibrium input, achieves asymptotic stability, and rejects time-varying disturbances.

\begin{remark} [Extension to multiple inputs] 
Proposition \ref{proposition:nl_knownC} can be extended to the case where system \eqref{system:SISO} has multiple inputs, i.e., $u \in \mathbb{R}^m$, if a virtual error $e_{\mathrm{v}} \in \mathbb{R}^m$ can be designed in the form of $e_{\mathrm{v}}=C_{\mathrm{v}}Z(x)+F_{\mathrm{v}}w_{\mathrm{v}}$ such that $\lim_{t\rightarrow\infty}x(t)=x_{\mathrm{e}}$ if and only if $\lim_{t\rightarrow\infty}e_{\mathrm{v}}(t)=0$. In the case where the nonlinear system is fully actuated, i.e., $n=m$, stabilization at a non-zero equilibrium can be achieved straightforwardly by letting $e_{\mathrm{v}} = x-x_{\mathrm{e}}$.
\end{remark}

\textbf{Example 2.} Consider the nonlinear system
\begin{align*}
    &\dot x_1 =2x_2-x_1 \notag \\  
    & \dot x_2 = -x_1+x_2-x_1^2x_2+u+d
\end{align*} 
and the non-zero equilibrium $x_{\mathrm{e}} = [-2~~-1]^\top$, where the disturbance is $d(t)=\cos(2t)$. 

The virtual output $e_{\mathrm{v}}$ is designed as $e_{\mathrm{v}} = (x_1+2)^2 + (x_2+1)^2$, which can be written into the form of \eqref{virtual_error:form standard} with $C_vZ_v(x)=4x_1+2x_2+x_1^2+x_2^2$ and $w_{\mathrm{v}}=5$. The exogenous system generating both $d$ and $w_{\mathrm{v}}$ is \eqref{sys: e.g. exosys} with $d= [0~~1~~0] w$ and $w_{\mathrm{v}}=[0~~0~~1]w$.


To obtain extended system \eqref{system:non_zero stab}, we make $\widetilde Z(x)$ contain all monomials having degree from $1$ to $3$, i.e.,
\begin{align*}
   \!\widetilde Z(x) \!= \!\begin{bmatrix} x_1 & x_2 & x_1x_2 & x_1^2 & x_2^2 & x_1^2x_2 & x_1x_2^2 & x_1^3 & x_2^3 \end{bmatrix}\!^{\top}\! .
\end{align*}
Then, we have $\widetilde C= [4~~2~~1~~1~~0~~0~~0~~0~~0]$ and $\widetilde F =[0~~0~~1]$.

The experimental data is collected by setting $x(t_0)=[-0.1~~0.1]^{\top}$ and $u=\sin(t)$. The data set is obtained with a sampling period $0.5$s and $T=30$. Solving condition \eqref{condition:known_C} with the MOSEK \cite{Mosek2023} solver gives
$K=\begin{bmatrix}
    -36.4774 &-9.7023   & 0 & 0 & 0 & 1 & 0 & 0 & 0
\end{bmatrix}$. Setting $\alpha=30$, $\Xi=[1~~1~~1]^{\top}$, and $\widehat K=20$, the stabilizer \eqref{controller:non_zero_stab} is obtained. Simulation results of state trajectories $x_1(t)$ and $x_2(t)$ with different initial conditions are illustrated in Figure \ref{fig:nonzero_stab}.

\begin{figure}[h]
\centering 
\includegraphics[trim=1.8cm 7cm 2cm 7cm, clip, scale=0.45]{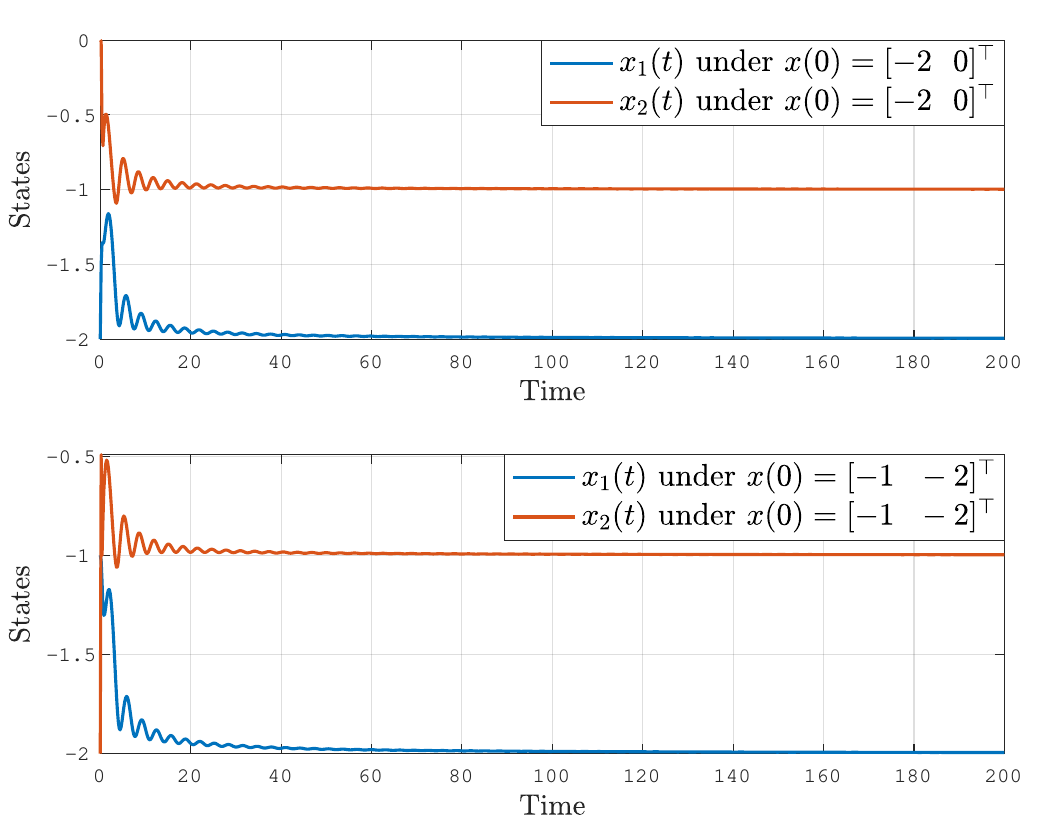}
\caption{State trajectories converging to the non-zero equilibrium $x_{\mathrm{e}}=[-2~~-1]^\top$ with different initial conditions.}
\label{fig:nonzero_stab}
\end{figure}

\vspace{-2ex}
\section{Conclusion and Future Work}\label{sec_conclusion}

This work addressed the data-driven output regulation problem of a class of nonlinear systems using incremental passivity. We first designed a data-driven state feedback controller to render the closed-loop system incrementally passive, then interconnect the closed-loop system with an incrementally passive internal model to complete the regulator design. In particular, a set of data-based LMIs characterize the data-driven feedback passivation controller. The result of this work provides an efficient data-driven regulator design where the passivation controller and the internal model are designed independently and the regulation error asymptotically converges to zero. We also applied the proposed approach to data-driven nonlinear stabilization of non-zero equilibrium without knowledge or estimation of the equilibrium input.  

\emph{Future work.} One of the major challenges to be overcome is handling a more general class of noise. As pointed out in Section \ref{sec:IP}, when the noise cannot be generated by the exosystem in \eqref{exosystem}, the closed-loop system represented using noisy data will contain uncertainties that need to be handled in the passivation process. {We will explore techniques such as robust control and passivity indices to make the closed-loop system incrementally passive despite the system uncertainties brought by noise or estimation errors.
Additionally, as incremental passivity is an input-output property, we will explore extending the proposed approach to data-driven output-feedback passivation and output-feedback regulator design using only input-output data.} 
\vspace{-1ex}

\section*{Acknowledgment}

The authors would like to express their gratitude to Prof. C. De Persis for the helpful discussions related to the results of this paper.    

\bibliographystyle{elsarticle-num}
\bibliography{ref_DDOPRJounal}

\end{document}